\newtheorem{theorem}{Theorem}[]
\newtheorem{proposition}[theorem]{Proposition}
\newcommand{\mrv}[1]{\mathbf{#1}}
\newcommand{\transp}[0]{\mathrm{T}}
\newcommand{\rel}[0]{\mathrm{rel}}
\newcommand{\secur}[0]{\mathrm{sec}}
\newcommand{\tr}{\operatorname{tr}}
\newcommand{\ket}[1]{\lvert{#1} \rangle}
\newcommand{\bi}{\mathbf{i}}
\newcommand{\bc}{\mathbf{c}}
\newcommand{\bk}{\mathbf{k}}
\newcommand{\bb}{\mathbf{b}}
\newcommand{\bx}{\mathbf{x}}
\newcommand{\bz}{\mathbf{z}}
\newcommand{\bxi}{{\boldsymbol{\xi}}}
\newcommand{\bs}{\mathbf{s}}
\newcommand{\Eve}{{\operatorname{Eve}}}
\newcommand{\Span}{{\operatorname{Span}}}
\begin{document}

\title{Security Against Collective Attacks of a Modified BB84 QKD
Protocol with Information only in One Basis}

\author{Michel Boyer}
\email{boyer@iro.umontreal.ca}
\affiliation{D\'epartement IRO, Universit\'e de Montr\'eal,
Montr\'eal (Qu\'ebec) H3C 3J7, Canada}
\author{Rotem Liss}
\email{rotemliss@cs.technion.ac.il}
\author{Tal Mor}
\email{talmo@cs.technion.ac.il}
\affiliation{Computer Science Department, Technion, Haifa 3200003,
Israel}

\begin{abstract}
The Quantum Key Distribution (QKD) protocol BB84 has been proven secure
against several important types of attacks: the collective attacks
and the joint attacks.
Here we analyze the security of a modified BB84
protocol, for which information is sent only in the $z$ basis
while testing is done in both the $z$ and the $x$ bases,
against collective attacks.
The proof follows the framework of a previous paper~\cite{BGM09},
but it avoids the classical information-theoretical
analysis that caused problems with composability.
We show that this modified BB84 protocol is as
secure against collective attacks as the original BB84 protocol,
and that it requires more bits for testing.
\end{abstract}

\maketitle

\section{Introduction}
Quantum Key Distribution (QKD) protocols take advantage of the laws
of quantum mechanics, and most of them can be proven secure
even against powerful adversaries limited only by the laws of physics.
The two parties (Alice and Bob) want to create a shared random key,
using an insecure quantum channel and an unjammable classical channel
(to which the adversary may listen, but not interfere).
The adversary (eavesdropper), Eve, tries to get as much information
as she can on the final shared key.
The first and most important QKD protocol is BB84~\cite{BB84}.

Boyer, Gelles, and Mor~\cite{BGM09} discussed the security of
the BB84 protocol against collective attacks.
Collective attacks~\cite{BM97a,BM97b,BBBGM02} are a subclass
of the joint attacks; joint attacks are the most powerful
theoretical attacks.
\cite{BGM09} improved the security proof of Biham, Boyer, Brassard,
van de Graaf, and Mor~\cite{BBBGM02}
against collective attacks,
by using some techniques of Biham, Boyer, Boykin, Mor,
and Roychowdhury~\cite{BBBMR06}
(that proved security against joint attacks).
In this paper, too, we restrict the analysis to collective attacks,
because security against collective attacks is conjectured
(and, in some security notions, proved~\cite{Renner05,MRR09})
to imply security against joint attacks. 
In addition, proving security against collective attacks is
much simpler than proving security against joint attacks.

In many QKD protocols, including BB84, Alice and Bob exchange
several types of bits (encoded as quantum systems, usually qubits):
INFO bits, that are secret bits shared by Alice and Bob and are used for
generating the final key (via classical processes of error correction
and privacy amplification); and TEST bits, that are publicly exposed
by Alice and Bob (by using the classical channel) and are used for
estimating the error rate. In BB84, each bit is sent from Alice
to Bob in a random basis (the $z$ basis or the $x$ basis).

In this paper, we extend the analysis of BB84 done in~\cite{BGM09}
and prove the security of a QKD protocol
we shall name \emph{BB84-INFO-$z$}.
This protocol is almost identical to BB84, except that all its INFO
bits are in the $z$ basis.
In other words, the $x$ basis is used only for testing. The bits
are thus partitioned into three disjoint sets: INFO, TEST-Z, and TEST-X.
The sizes of these sets are arbitrary ($n$ INFO bits, $n_z$
TEST-Z bits, and $n_x$ TEST-X bits).

We note that, while this paper follows a line of research that mainly
discusses a specific approach of security proof for BB84
and similar protocols (this approach,
notably, considers finite-key effects
and not only the asymptotic error rate),
many other approaches have also been suggested:
see for example~\cite{Mayers01,SP00,Renner05,bb84_sec_renner}.

In contrast to the line of research adopted here
(of~\cite{BM97a,BM97b,BBBGM02,BBBMR06,BGM09}),
in which a classical information-theoretical analysis caused
problems with composability (see definition in~\cite{Renner05}),
in this paper we suggest a method to avoid those problems:
we calculate the trace distance between any two density matrices Eve
may hold, instead of calculating the classical mutual information
between Eve and the final key (as done in those previous papers).
This method is implemented in this paper for the proof of BB84-INFO-$z$;
it also directly applies to the BB84 security proof in~\cite{BGM09},
and it may be extended in the future to show that the BB84
security proofs of~\cite{BGM09},~\cite{BBBGM02}, and~\cite{BBBMR06}
prove the composable security of BB84.

The ``qubit space'', $\mathscr{H}_2$, is a $2$-dimensional
Hilbert space. The states $\ket{0^0}, \ket{1^0}$ form an orthonormal
basis of $\mathscr{H}_2$, called ``the computational basis'' or
``the $z$ basis''. The states
$\ket{0^1} \triangleq \frac{1}{\sqrt{2}}[\ket{0^0} + \ket{1^0}]$ and
$\ket{1^1} \triangleq \frac{1}{\sqrt{2}}[\ket{0^0} - \ket{1^0}]$
form another orthonormal basis of $\mathscr{H}_2$,
called ``the $x$ basis''. Those two bases are said
to be \emph{conjugate bases}.

In this paper, bit strings of some length $t$ are denoted
by a bold letter (e.g., $\bi=i_1\ldots i_t$
with $i_1,\ldots,i_t \in \{0,1\}$)
and are identified to elements of the $t$-dimensional 
$\mathbf{F}_2$-vector space $\mathbf{F}_2^t$,
where $\mathbf{F}_2 = \{0,1\}$ and the addition of
two vectors corresponds to a XOR operation.
The number of $1$-bits in a bit string $\bs$ is denoted by $|\bs|$,
and the Hamming distance between two strings $\bs$ and $\bs'$
is $d_H(\bs, \bs') = |\bs + \bs'|$.

\section{\label{bb84-info-z_description}Formal Description
of the BB84-INFO-$z$ Protocol}
Below we describe the BB84-INFO-$z$ protocol used in this paper.
\begin{enumerate}
\item Alice and Bob pre-agree on numbers $n$, $n_z$, and $n_x$
(we denote $N \triangleq n + n_z + n_x$),
on error thresholds $p_{a,z}$ and $p_{a,x}$,
on a linear error-correcting code $C$ with
an $r\times n$ parity check matrix $P_C$,
and on a linear key-generation function (privacy amplification)
represented by an $m\times n$ matrix $P_K$.
It is required that \emph{all} the $r+m$ rows of the matrices
$P_C$ and $P_K$ put together are linearly independent.

\item Alice randomly chooses a partition
$\mathcal{P} = (\bs, \bz, \bb)$ of the $N$ bits by randomly choosing
three $N$-bit strings $\bs, \bz, \bb \in \mathbf{F}^{N}_2$
that satisfy $|\bs| = n, |\bz| = n_z, |\bb| = n_x$, and
$|\bs + \bz + \bb| = N$. $\mathcal{P}$ thus partitions the set
of indexes $\lbrace 1, 2, ..., N \rbrace$ into three disjoint sets:
\begin{itemize}
\item $I$ (INFO bits, where $s_j = 1$) of size $n$;
\item $T_Z$ (TEST-Z bits, where $z_j = 1$) of size $n_z$; and
\item $T_X$ (TEST-X bits, where $b_j = 1$) of size $n_x$.
\end{itemize}

\item Alice randomly chooses an $N$-bit string
$\bi \in \mathbf{F}^{N}_2$, and sends the $N$ qubit states
$\ket{i_1^{b_1}}, \ket{i_2^{b_2}}, \ldots, \ket{i_{N}^{b_{N}}}$,
one after the other, to Bob using the quantum channel.
Notice that the INFO and TEST-Z bits are encoded in the $z$ basis,
while the TEST-X bits are encoded in the $x$ basis.
Bob keeps each received qubit in quantum memory,
not measuring it yet~\footnote{
Here we assume that Bob has a quantum memory and can delay his
measurement. In practical implementations, Bob usually cannot do that,
but is assumed to measure in a randomly-chosen basis ($z$ or $x$),
so that Alice and Bob later discard the qubits measured in the wrong
basis. We assume that Alice sends more than $N$ qubits, so that $N$
qubits are finally detected by Bob and measured in the correct basis.}.

\item Alice publicly sends to Bob the string $\bb = b_1 \cdots b_N$.
Bob measures each saved qubit in the correct basis (namely,
if $b_i = 0$ then he measures the $i$-th qubit in the $z$ basis, and
if $b_i = 1$ then he measures it in the $x$ basis).

The bit string measured by Bob is denoted by $\bi^B$.
If there is no noise and no eavesdropping, then $\bi^B = \bi$.

\item Alice publicly sends to Bob the string $\bs$.
The INFO bits, used for generating the final key,
are the $n$ bits with $s_j = 1$,
while the TEST-Z and TEST-X bits are
the $n_z + n_x$ bits with $s_j = 0$.
The substrings of $\bi, \bb$ that correspond to the INFO bits
are denoted by $\bi_{\bs}$ and $\bb_{\bs}$.

\item Alice and Bob both publish their values of
all the TEST-Z and TEST-X bits, and compare the bit values.
If more than $n_z \cdot p_{a,z}$ of the TEST-Z bits are different
between Alice and Bob \emph{or} more than $n_x \cdot p_{a,x}$
of the TEST-X bits are different between them,
they abort the protocol. We note that $p_{a,z}$ and $p_{a,x}$
(the pre-agreed error thresholds) are the maximal allowed
error rates on the TEST-Z and TEST-X bits, respectively --
namely, in each basis ($z$ and $x$) separately.

\item Alice and Bob keep the values of the remaining
$n$ bits (the INFO bits, with $s_j = 1$) secret. 
The bit string of Alice is denoted $\bx = \bi_{\bs}$,
and the bit string of Bob is denoted $\bx^B$.

\item Alice sends to Bob the $r$-bit string
$\bxi = \bx P_C^\transp$, that is called the \emph{syndrome}
of $\bx$ (with respect to the error-correcting code $C$
and to its corresponding parity check matrix $P_C$).
By using $\bxi$, Bob corrects the errors in his $\bx^B$ string
(so that it is the same as $\bx$).

\item Alice and Bob compute the $m$-bit final key
${\bf k} = \bx P_K^\transp$.
\end{enumerate}

The protocol is defined similarly to BB84 (and to its description
in~\cite{BGM09}), except that it uses the generalized bit numbers
$n$, $n_z$, and $n_x$ (numbers of INFO, TEST-Z, and TEST-X bits,
respectively);
that it uses the partition $\mathcal{P} = (\bs, \bz, \bb)$ for
dividing the $N$-bit string $\bi$ into three disjoint sets of indexes
($I$, $T_Z$, and $T_X$); and that it uses two separate thresholds
($p_{a,z}$ and $p_{a,x}$) instead of one ($p_a$).

\section{Security Proof of {BB84-INFO-$z$} Against Collective Attacks}
\subsection{Results from~\cite{BGM09}}
The security proof of BB84-INFO-$z$ against collective attacks is very
similar to the security proof of BB84 itself against collective
attacks, that was detailed in~\cite{BGM09}. Most parts of the proof are
not affected at all by the changes made to BB84 to get the
BB84-INFO-$z$ protocol (changes detailed in
Section~\ref{bb84-info-z_description} of the current paper),
because those parts assume fixed strings $\bs$ and $\bb$,
and because the attack is collective (so the analysis is
restricted to the INFO bits).

Therefore, the reader is referred to the proof in Section~2 and
Subsections~3.1 to~3.5 of~\cite{BGM09}, that applies to BB84-INFO-$z$
without any changes (except changing the total number of bits,
$2n$, to $N$, which does not affect the proof at all),
and that will not be repeated here.

We denote the rows of the error-correction parity check matrix $P_C$
as the vectors $v_1,\ldots,v_r$ in $\mathbf{F}_2^n$,
and the rows of the privacy amplification matrix $P_K$
as the vectors $v_{r+1}, \ldots,v_{r+m}$.
We also define, for every $r'$,
$V_{r'} \triangleq \Span \lbrace v_1, ..., v_{r'} \rbrace$;
and we define
\begin{equation}
d_{r,m} \triangleq \min_{r
\leq r' < r+m} d_H(v_{r'+1}, V_{r'}) = \min_{r \leq r' < r+m}
d_{r',1}.
\end{equation}

For a $1$-bit final key $k \in \lbrace 0, 1 \rbrace$, we define
$\widehat{\rho}_k$ to be the state of Eve corresponding to the final
key $k$, given that she knows $\bxi$. Thus,
\begin{align}\label{rhohatk}
\widehat{\rho}_k &= \frac{1}{2^{n-r-1}}\sum_{\bx\, \big|
{\scriptsize\begin{matrix}\bx P_C^T = \bxi\\
\bx \cdot v_{r+1} = k\end{matrix}}} \rho_\bx^{\bb'},
\end{align}
where $\rho_\bx^{\bb'}$ is Eve's state after the attack, given that
Alice sent the INFO bits $\bx$ encoded in the bases $\bb' = \bb_\bs$.
We also defined in~\cite{BGM09} the state $\widetilde\rho_k$,
that is a lift-up of $\widehat{\rho}_k$
(which means that $\widehat{\rho}_k$ is a partial
trace of $\widetilde\rho_k$).

In the end of Subsection~3.5 of~\cite{BGM09}, it was found that
(in the case of a $1$-bit final key, i.e., $m=1$)
\begin{equation}\label{boundrho}
\frac{1}{2} \tr |\widetilde\rho_0 - \widetilde\rho_1| \leq 2
\sqrt{P\left[|\mathbf{C}_{I}| \geq \frac{d_{r,1}}{2} \ \mid
\ \mathbf{B}_{I} =
\overline{\bb'}, \bs \right]},
\end{equation}
where $\mathbf{C}_{I}$ is the random variable corresponding to
the $n$-bit string of errors on the $n$ INFO bits;
$\mathbf{B}_{I}$ is the random variable corresponding to
the $n$-bit string of bases of the $n$ INFO bits; $\overline{\bb'}$
is the bit-flipped string of $\bb' = \bb_\bs$; and $d_{r,1}$ (and,
in general, $d_{r,m}$) was defined above.

Now, according to \cite[Theorem~9.2 and page~407]{NCBook}, and using
the fact that $\widehat{\rho}_k$ is a partial trace of
$\widetilde\rho_k$, we find that
$\frac{1}{2} \tr |\widehat{\rho}_0 - \widehat{\rho}_1| \le
\frac{1}{2} \tr |\widetilde\rho_0 - \widetilde\rho_1|$. From this
result and from inequality \eqref{boundrho} we deduce that
\begin{equation}\label{boundrho2}
\frac{1}{2} \tr |\widehat{\rho}_0 - \widehat{\rho}_1| \leq 2
\sqrt{P\left[|\mathbf{C}_{I}| \geq \frac{d_{r,1}}{2} \ \mid
\ \mathbf{B}_{I} =
\overline{\bb'}, \bs \right]}.
\end{equation}

\subsection{Bounding the Differences Between Eve's States}
We define $\bc \triangleq \bi + \bi^B$: namely, $\bc$ is the XOR
of the $N$-bit string $\bi$ sent by Alice and
of the $N$-bit string $\bi^B$ measured by Bob.
For each index $1 \le l \le N$, $c_l=1$ if and only if
Bob's $l$-th bit value is different from the $l$-th bit sent by Alice.
The partition $\mathcal{P}$ divides the $N$ bits into
$n$ INFO bits, $n_z$ TEST-Z bits, and $n_x$ TEST-X bits.
The corresponding substrings of the error string $\bc$ are
$\bc_\bs$ (the string of errors on the INFO bits),
$\bc_\bz$ (the string of errors on the TEST-Z bits), and
$\bc_\bb$ (the string of errors on the TEST-X bits).
The random variables that correspond to
$\bc_\bs$, $\bc_\bz$, and $\bc_\bb$ are denoted by
$\mrv{C}_I$, $\mrv{C}_{T_Z}$, and $\mrv{C}_{T_X}$, respectively.

We define $\widetilde{\mrv{C}_I}$ to be the random variable
corresponding to the string of errors on the INFO bits
\emph{if Alice had encoded and sent the INFO bits in the $x$ basis}
(instead of the $z$ basis dictated by the protocol).
In those notations, inequality~\eqref{boundrho2} reads as
\begin{align}
\frac{1}{2} \tr |\widehat{\rho}_0 - \widehat{\rho}_1| &\leq 2
\sqrt{P\left[|\widetilde{\mathbf{C}_I}| \geq \frac{d_{r,1}}{2}
\ \mid\ \mathcal{P} \right]} \nonumber \\
&= 2
\sqrt{P\left[|\widetilde{\mathbf{C}_I}| \geq \frac{d_{r,1}}{2}
\ \mid\ \bc_\bz, \bc_\bb, \mathcal{P} \right]},\label{sdrewritten}
\end{align}
using the fact that Eve's attack is collective, so the qubits
are attacked independently, and, therefore, the errors on the INFO bits
are independent of the errors on the TEST-Z and TEST-X bits
(namely, of $\bc_\bz$ and $\bc_\bb$).

As described in~\cite{BGM09}, inequality~\eqref{sdrewritten}
was not derived for the actual attack
$U = U_1\otimes \ldots \otimes U_N$ applied
by Eve, but for a virtual flat attack (that depends on $\bb$
and therefore could not have been applied by Eve).
That flat attack gives the same states $\widehat{\rho}_0$
and $\widehat{\rho}_1$ as the original attack $U$, and gives a lower
(or the same) error rate in the conjugate basis. Therefore,
inequality~\eqref{sdrewritten} also holds for the original attack
$U$. This means that, from now on, all our results apply to
the original attack $U$ and not the flat attack.

So far, we have discussed a $1$-bit key.
We will now discuss a general $m$-bit key $\bk$.
We define $\widehat{\rho}_\bk$ to be the state of Eve corresponding
to the final key $\bk$, given that she knows $\bxi$:
\begin{align}
\widehat{\rho}_\bk &= \frac{1}{2^{n-r-m}}\sum_{\bx\,
\big|{\scriptsize\begin{matrix}\bx P_C^T =
\bxi\\ \bx P_K^T = \bk\end{matrix}}} \rho_\bx^{\bb'}
\end{align}

\begin{proposition}\label{probbound}
For any two $m$-bit keys $\bk,\bk'$,
\begin{align}
&\frac{1}{2} \tr |\widehat{\rho}_{\bk} - \widehat{\rho}_{\bk'}|
\nonumber \\
&\leq 2m\sqrt{
P\left[| \widetilde{\mrv{C}_I} | \geq \frac{d_{r,m}}{2}
\mid \bc_\bz, \bc_\bb, \mathcal{P}\right]}. \label{eqlemma}
\end{align}
\end{proposition}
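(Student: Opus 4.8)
The plan is to reduce the $m$-bit bound to $2m$ applications of the one-bit bound \eqref{sdrewritten}, by revealing the key bits one at a time and telescoping. For $0\le j\le m$ let $\sigma_j$ be the normalized average of $\rho_\bx^{\bb'}$ over all $\bx$ with $\bx P_C^{\transp}=\bxi$ and $\bx\cdot v_{r+i}=k_i$ for $1\le i\le j$; thus $\sigma_0$ depends only on $\bxi$, while $\sigma_m=\widehat{\rho}_{\bk}$. Define $\sigma'_j$ in the same way with $\bk$ replaced by $\bk'$, so $\sigma'_0=\sigma_0$ and $\sigma'_m=\widehat{\rho}_{\bk'}$. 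By the triangle inequality,
\[
\tfrac12\tr|\widehat{\rho}_{\bk}-\widehat{\rho}_{\bk'}|\le\sum_{j=1}^{m}\tfrac12\tr|\sigma_j-\sigma_{j-1}|+\sum_{j=1}^{m}\tfrac12\tr|\sigma'_j-\sigma'_{j-1}| ,
\]
so it is enough to bound each of these $2m$ terms by $\sqrt{P[\,|\widetilde{\mrv{C}_I}|\ge d_{r,m}/2\mid\bc_\bz,\bc_\bb,\mathcal{P}\,]}$.

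Fix $j$ and let $\sigma_j^{(b)}$ ($b\in\{0,1\}$) be the normalized average of $\rho_\bx^{\bb'}$ over all $\bx$ with $\bx P_C^{\transp}=\bxi$, $\bx\cdot v_{r+i}=k_i$ for $1\le i\le j-1$, and $\bx\cdot v_{r+j}=b$. Comparing normalizations gives $\sigma_{j-1}=\tfrac12(\sigma_j^{(0)}+\sigma_j^{(1)})$ and $\sigma_j=\sigma_j^{(k_j)}$, hence $\tfrac12\tr|\sigma_j-\sigma_{j-1}|=\tfrac14\tr|\sigma_j^{(0)}-\sigma_j^{(1)}|$. The point is that $\sigma_j^{(0)}$ and $\sigma_j^{(1)}$ are precisely the states $\widehat{\rho}_0$ and $\widehat{\rho}_1$ of \eqref{rhohatk} for the one-bit-key variant of BB84-INFO-$z$ whose error-correcting code has parity-check rows $v_1,\ldots,v_{r+j-1}$, whose privacy-amplification vector is the single row $v_{r+j}$, and whose revealed syndrome is $(\bxi,k_1,\ldots,k_{j-1})$: the normalization $2^{\,n-r-j}$ of $\sigma_j^{(b)}$ equals $2^{\,n-(r+j-1)-1}$, and the $r+j$ vectors $v_1,\ldots,v_{r+j}$ are linearly independent, being a subset of the $r+m$ independent rows of $P_C$ and $P_K$, so this variant meets the hypotheses of the proof. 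Since the random variables $\widetilde{\mrv{C}_I},\mrv{C}_{T_Z},\mrv{C}_{T_X}$ depend only on the partition $\mathcal{P}$ and on Eve's collective attack, and not on the code or the privacy-amplification matrix, the entire analysis of Section~2 and Subsections~3.1--3.5 of~\cite{BGM09} (including the flat-attack reduction leading to \eqref{sdrewritten}) applies verbatim to this variant, with $r$ replaced by $r+j-1$.

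Applying \eqref{sdrewritten} in this form gives $\tfrac12\tr|\sigma_j^{(0)}-\sigma_j^{(1)}|\le2\sqrt{P[\,|\widetilde{\mrv{C}_I}|\ge d_{r+j-1,1}/2\mid\bc_\bz,\bc_\bb,\mathcal{P}\,]}$, and therefore $\tfrac12\tr|\sigma_j-\sigma_{j-1}|=\tfrac14\tr|\sigma_j^{(0)}-\sigma_j^{(1)}|\le\sqrt{P[\,|\widetilde{\mrv{C}_I}|\ge d_{r+j-1,1}/2\mid\bc_\bz,\bc_\bb,\mathcal{P}\,]}$. Because $d_{r,m}=\min_{r\le r'<r+m}d_{r',1}\le d_{r+j-1,1}$ and $D\mapsto P[\,|\widetilde{\mrv{C}_I}|\ge D/2\mid\cdots\,]$ is non-increasing, each such term is at most $\sqrt{P[\,|\widetilde{\mrv{C}_I}|\ge d_{r,m}/2\mid\bc_\bz,\bc_\bb,\mathcal{P}\,]}$; the identical argument bounds the primed terms. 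Summing the $2m$ terms yields \eqref{eqlemma}. I expect the one delicate step to be the identification in the middle paragraph — namely, checking that the one-bit security analysis of~\cite{BGM09}, and in particular the flat-attack step behind \eqref{sdrewritten}, is genuinely insensitive to which linear code and which privacy-amplification vector are used, so that it may legitimately be invoked for each of the $m$ one-bit sub-protocols obtained by peeling off the key bits in order.
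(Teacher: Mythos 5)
Your proof is correct, and it reaches the bound of Proposition~\ref{probbound} by a genuinely different telescoping than the paper's. The paper interpolates between $\bk$ and $\bk'$ through $m$ hybrid keys $\bk_j$ (first $j$ bits of $\bk'$, last $m-j$ bits of $\bk$), so each step compares two \emph{fully specified} keys differing only in bit $j$; to invoke the one-bit bound it attaches the other $m-1$ (identical) key bits to the syndrome, so the relevant distance is $d_j = d_H(v_{r+j}, V'_j)$ with $V'_j$ the span of \emph{all} the other $r+m-1$ rows, and it then needs a short additional argument (writing $d_j = |\sum_i a_i v_i|$ with $a_{r+j}\ne 0$) to conclude $d_{r,m} \le d_j$. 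You instead walk down and back up through the partially-averaged states $\sigma_j,\sigma'_j$, meeting at the syndrome-only state $\sigma_0=\sigma'_0$; each of your $2m$ steps costs only $\sqrt{P[\cdots]}$ thanks to the identity $\tfrac12\tr|\sigma_j-\sigma_{j-1}|=\tfrac14\tr|\sigma_j^{(0)}-\sigma_j^{(1)}|$, and the distance that appears is $d_{r+j-1,1}$, which is literally one of the terms in the minimum defining $d_{r,m}$, so the comparison $d_{r,m}\le d_{r+j-1,1}$ is immediate and the paper's extra Hamming-distance argument is not needed. The step you flag as delicate --- reusing the one-bit analysis of~\cite{BGM09} with a different parity-check matrix, privacy-amplification vector, and extended syndrome $(\bxi,k_1,\ldots,k_{j-1})$ --- is exactly the step the paper itself relies on (it applies ``the same proof'' after ``attaching the other (identical) key bits to $\bxi$''), and it is legitimate because that analysis depends on the code only through the linear independence of the rows and the resulting distance $d_{r',1}$, while the flat-attack reduction behind~\eqref{sdrewritten} depends only on $\bb$ and on Eve's collective attack, not on the code; the normalization check $2^{n-r-j}=2^{n-(r+j-1)-1}$ and the equal splitting of the solution set (guaranteed by independence of $v_{r+j}$ from the earlier rows) are the right things to verify. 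In the end both routes give the same constant: the paper uses $m$ terms each at most $2\sqrt{P}$, you use $2m$ terms each at most $\sqrt{P}$, so both yield the stated $2m\sqrt{P\left[|\widetilde{\mrv{C}_I}|\ge d_{r,m}/2 \mid \bc_\bz,\bc_\bb,\mathcal{P}\right]}$.
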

\begin{proof}
We define the key $\bk_j$, for $0 \le j \le m$, to consist of the first
$j$ bits of $\bk'$ and the last $m-j$ bits of $\bk$. This means that
$\bk_0 = \bk$, $\bk_m = \bk'$, and $\bk_{j-1}$ differs from $\bk_j$
at most on a single bit (the $j$-th bit).

First, we find a bound on $\frac{1}{2} \tr |\widehat{\rho}_{\bk_{j-1}}
- \widehat{\rho}_{\bk_j}|$: since $\bk_{j-1}$ differs from $\bk_j$
at most on a single bit (the $j$-th bit, given by the formula
$\bx \cdot v_{r+j}$), we can use the same proof
that gave us inequality~\eqref{sdrewritten}, attaching the other
(identical) key bits to $\bxi$ of the original proof; and we find that:
\begin{align}
&\frac{1}{2} \tr |\widehat{\rho}_{\bk_{j-1}} - \widehat{\rho}_{\bk_j}|
\nonumber \\
&\leq 2 \sqrt{P\left[|\widetilde{\mathbf{C}_I}| \geq \frac{d_j}{2}
\ \mid\ \bc_\bz, \bc_\bb, \mathcal{P} \right]} \label{StatesDiff1}
\end{align}
where we define $d_j$ as $d_H(v_{r+j}, V'_j)$, and
$V'_j \triangleq \Span \lbrace v_1, v_2, \ldots, v_{r+j-1},
v_{r+j+1}, \ldots, v_{r+m} \rbrace$.

Now we notice that $d_j$ is the Hamming distance between $v_{r+j}$ and
some vector in $V'_j$, which means that $d_j = |\sum_{i =
1}^{r+m} a_i v_i|$ with $a_i \in \mathbf{F}_2$ and $a_{r+j} \ne 0$.
The properties of Hamming distance assure us that $d_j$ is
at least $d_H(v_{r'+1}, V_{r'})$ for some $r \le r' < r + m$.
Therefore, we find that $d_{r,m} = \min_{r \leq r' < r+m}
d_H(v_{r'+1}, V_{r'}) \le d_j$.

The result $d_{r,m} \le d_j$ implies that
if $|\widetilde{\mathbf{C}_I}| \geq \frac{d_j}{2}$
then $|\widetilde{\mathbf{C}_I}|\geq \frac{d_{r,m}}{2}$.
Therefore, inequality~\eqref{StatesDiff1} implies
\begin{align}
&\frac{1}{2} \tr |\widehat{\rho}_{\bk_{j-1}} - \widehat{\rho}_{\bk_j}|
\nonumber \\
&\leq 2 \sqrt{P\left[|\widetilde{\mathbf{C}_I}| \geq
\frac{d_{r,m}}{2} \ \mid \ \bc_\bz, \bc_\bb, \mathcal{P} \right]}.
\label{StatesDiff2}
\end{align}
Now we use the triangle inequality for norms to find
\begin{align}
&\frac{1}{2} \tr |\widehat{\rho}_{\bk} - \widehat{\rho}_{\bk'}|
\nonumber \\
&= \frac{1}{2} \tr |\widehat{\rho}_{\bk_0} - \widehat{\rho}_{\bk_m}|
\leq \sum_{j = 1}^m \frac{1}{2} \tr |\widehat{\rho}_{\bk_{j-1}}
- \widehat{\rho}_{\bk_j}| \nonumber \\
&\leq 2m \sqrt{P\left[|\widetilde{\mathbf{C}_I}|
\geq \frac{d_{r,m}}{2}
\ \mid\ \bc_\bz, \bc_\bb, \mathcal{P} \right]}.\label{StatesDiff3}
\end{align}
\end{proof}

The value we want to bound is the expected value of difference
between two states of Eve corresponding to two final keys. However, we
should take into account that if the test fails, no final key is
generated, and the difference between all of Eve's states becomes $0$
for any purpose. We thus define the random variable
$\Delta_\Eve^{(p_{a,z}, p_{a,x})}(\bk,\bk')$
for any two final keys $\bk,\bk'$:
\begin{align}
&\Delta_\Eve^{(p_{a,z}, p_{a,x})}(\bk, \bk' | \mathcal{P},\bxi,
\bc_\bz, \bc_\bb) \nonumber \\
&\triangleq
\begin{cases} 
\frac{1}{2} \tr |\widehat{\rho}_{\bk} - \widehat{\rho}_{\bk'}| 
& \text{if $\displaystyle \frac{|\bc_\bz|}{n_z} \leq p_{a,z}$ and
$\displaystyle \frac{|\bc_\bb|}{n_x} \leq p_{a,x}$} \\
0 & \text{otherwise} \end{cases}\label{defipa}
\end{align}

We need to bound the expected value
$\langle \Delta_\Eve^{(p_{a,z}, p_{a,x})}(\bk, \bk') \rangle$,
that is given by:
\begin{align}
\langle \Delta_\Eve^{(p_{a,z}, p_{a,x})}(\bk, \bk')\rangle
&= \sum_{\mathcal{P},\bxi, \bc_\bz, \bc_\bb} \nonumber \\
&\Delta_\Eve^{(p_{a,z}, p_{a,x})}(\bk, \bk' | \mathcal{P},\bxi,
\bc_\bz, \bc_\bb) \nonumber \\
&\cdot p(\mathcal{P},\bxi, \bc_\bz, \bc_\bb)\label{eqipa}
\end{align}
\begin{theorem}\label{thmsecurity}
\begin{align}
\langle \Delta_\Eve^{(p_{a,z}, p_{a,x})}(\bk, \bk')\rangle &\leq 2m
\sqrt{P\Big[\textstyle \left( \frac{| \widetilde{\mrv{C}_I}|}{n} \geq
\frac{d_{r,m}}{2n}\right)} \nonumber \\
&\overline{\textstyle \wedge \left(\frac{|\mrv{C}_{T_Z}|}{n_z} \leq
p_{a,z}\right)} \nonumber \\
&\overline{\textstyle \wedge \left(\frac{|\mrv{C}_{T_X}|}{n_x} \leq
p_{a,x}\right) \Big]} \label{boundotherbasis}
\end{align}
where $\frac {|\widetilde{\mrv{C}_I}|}{n}$ is the random variable
corresponding to the error rate on the INFO bits
if they had been encoded in the $x$ basis,
$\frac{|\mrv{C}_{T_Z}|}{n_z}$ is the random variable
corresponding to the error rate on the TEST-Z bits,
and $\frac{|\mrv{C}_{T_X}|}{n_x}$ is the random variable
corresponding to the error rate on the TEST-X bits.
\end{theorem}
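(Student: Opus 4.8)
The plan is to expand $\langle \Delta_\Eve^{(p_{a,z},p_{a,x})}(\bk,\bk')\rangle$ via its definition in~\eqref{eqipa} and~\eqref{defipa}, and then collapse it onto the bound of Proposition~\ref{probbound}. First I would note that in the sum~\eqref{eqipa} every term with $|\bc_\bz|/n_z > p_{a,z}$ or $|\bc_\bb|/n_x > p_{a,x}$ vanishes by~\eqref{defipa}, so the sum runs effectively only over partitions $\mathcal{P}$, syndromes $\bxi$, and test-error strings $\bc_\bz,\bc_\bb$ for which the test passes; and on each surviving term, where $\Delta_\Eve = \frac12\tr|\widehat\rho_\bk - \widehat\rho_{\bk'}|$, I substitute the bound from Proposition~\ref{probbound}.

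Next, since neither the bound from Proposition~\ref{probbound} nor the test-passing condition depends on $\bxi$, I would carry out the sum over $\bxi$, turning $p(\mathcal{P},\bxi,\bc_\bz,\bc_\bb)$ into the marginal $p(\mathcal{P},\bc_\bz,\bc_\bb)$. At this stage the right-hand side is a weighted average of the square roots $\sqrt{P[\,|\widetilde{\mrv{C}_I}|\ge d_{r,m}/2 \mid \bc_\bz,\bc_\bb,\mathcal{P}\,]}$, with weights $p(\mathcal{P},\bc_\bz,\bc_\bb)$ that sum to at most $1$ over the truncated index set. The key step is then to invoke concavity of the square root (equivalently, Cauchy--Schwarz) to pull the average inside the root, which is legitimate precisely because the weights sum to $\le 1$.

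Finally I would identify the resulting sum under the root as a marginalization: the product $P[\,|\widetilde{\mrv{C}_I}|\ge d_{r,m}/2 \mid \bc_\bz,\bc_\bb,\mathcal{P}\,]\cdot p(\mathcal{P},\bc_\bz,\bc_\bb)$ is the joint probability $P[\,(|\widetilde{\mrv{C}_I}|\ge d_{r,m}/2)\wedge(\mrv{C}_{T_Z}=\bc_\bz)\wedge(\mrv{C}_{T_X}=\bc_\bb)\wedge\mathcal{P}\,]$, and summing over all $\mathcal{P}$ and over all $\bc_\bz,\bc_\bb$ with $|\bc_\bz|/n_z\le p_{a,z}$ and $|\bc_\bb|/n_x\le p_{a,x}$ yields exactly the joint probability appearing in~\eqref{boundotherbasis}; dividing the thresholds in the event $|\widetilde{\mrv{C}_I}|\ge d_{r,m}/2$ by $n$ puts it in the stated form.

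I expect the only delicate point to be the concavity step: one has to keep careful track of normalization — the probability weights sum to $\le 1$ rather than $=1$ over the truncated set, which is exactly the direction that makes Cauchy--Schwarz usable here — and one has to check that the conditional probability and the marginal $p(\mathcal{P},\bc_\bz,\bc_\bb)$ combine into a single honest joint probability over the discrete space of $(\mathcal{P},\bc_\bz,\bc_\bb)$. Everything else is bookkeeping.
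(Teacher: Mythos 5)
Your proposal is correct and follows essentially the same route as the paper's proof: restrict the sum to test-passing terms via \eqref{defipa}, apply Proposition~\ref{probbound} termwise, marginalize over $\bxi$, and recombine the conditional probability with $p(\mathcal{P},\bc_\bz,\bc_\bb)$ into the joint probability of \eqref{boundotherbasis}. The only (cosmetic) difference is that you pull the average inside the root by concavity of $\sqrt{x}$ (Cauchy--Schwarz, with weights summing to $\le 1$), whereas the paper squares $\langle \Delta_\Eve^{(p_{a,z}, p_{a,x})}(\bk, \bk')\rangle$ and uses convexity of $x^2$ before taking the square root at the end --- the same Jensen-type step in equivalent form.
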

\begin{proof}
We use the convexity of $x^2$, namely, the fact that for all $\{p_i\}_i$
satisfying $p_i \geq 0$ and $\sum_i p_i = 1$, it holds that
$(\sum_i p_i x_i)^2 \leq \sum_i p_i x_i^2$. We find that:
\begin{align*}
&\langle \Delta_\Eve^{(p_{a,z}, p_{a,x})}(\bk, \bk')\rangle^2 \\
= &
\Big[ \sum_{\mathcal{P},\bxi, \bc_\bz, \bc_\bb} \Delta_\Eve^{(p_{a,z},
p_{a,x})}(\bk, \bk' | \mathcal{P},\bxi, \bc_\bz, \bc_\bb) \\
& \cdot p( \mathcal{P}, \bxi, \bc_\bz, \bc_\bb)\Big]^2
\qquad\qquad\qquad \text{(by \eqref{eqipa})}\\
\leq &\sum_{\mathcal{P},\bxi, \bc_\bz, \bc_\bb} \left(
\Delta_\Eve^{(p_{a,z}, p_{a,x})}(\bk, \bk' | \mathcal{P},\bxi, \bc_\bz,
\bc_\bb) \right)^2 \\
&\cdot p(\mathcal{P},\bxi, \bc_\bz, \bc_\bb)
\qquad\qquad\qquad \text{(by convexity of $x^2$)}\\
= &
\sum_{\mathcal{P},\bxi,\frac{|\bc_\bz|}{n_z}\leq p_{a,z},
\frac{|\bc_\bb|}{n_x}\leq p_{a,x}} 
\left( \frac{1}{2} \tr |\widehat{\rho}_{\bk} -
\widehat{\rho}_{\bk'}| \right)^2 \\
&\cdot p( \mathcal{P}, \bxi, \bc_\bz, \bc_\bb)
\qquad\qquad\qquad \text{(by \eqref{defipa})} \\
\leq &
4m^2 \cdot \sum_{\mathcal{P},\bxi,\frac{|\bc_\bz|}{n_z}\leq p_{a,z},
\frac{|\bc_\bb|}{n_x}\leq p_{a,x}}
P\left[\textstyle | \widetilde{\mrv{C}_I}| \geq \frac{d_{r,m}}{2} \mid
\bc_\bz, \bc_\bb, \mathcal{P}\right] \\
&\cdot p(\mathcal{P},\bxi, \bc_\bz, \bc_\bb)
\qquad\qquad\qquad \text{(by \eqref{eqlemma})} \\
= &
4m^2 \cdot \sum_{\mathcal{P},\frac{|\bc_\bz|}{n_z}\leq p_{a,z},
\frac{|\bc_\bb|}{n_x}\leq p_{a,x}}
P\left[ \textstyle | \widetilde{\mrv{C}_I}| \geq \frac{d_{r,m}}{2}
\mid \bc_\bz, \bc_\bb, \mathcal{P}\right] \\
&\cdot p(\mathcal{P},\bc_\bz, \bc_\bb) \\
= &
4m^2 \cdot \sum_{\mathcal{P}}
P\Big[ \textstyle \left(| \widetilde{\mrv{C}_I} | \geq
\frac{d_{r,m}}{2}\right) \\
&\textstyle \wedge \left( \frac{|\mrv{C}_{T_Z}|}{n_z} \leq
p_{a,z}\right)
\wedge \left( \frac{|\mrv{C}_{T_X}|}{n_x} \leq p_{a,x}\right) \mid
\mathcal{P} \Big] \cdot p(\mathcal{P}) \nonumber \\ 
= & 4m^2 \cdot P\Big[ \textstyle \left(| \widetilde{\mrv{C}_I} | \geq
\frac{d_{r,m}}{2}\right) \\ 
&\textstyle \wedge \left( \frac{|\mrv{C}_{T_Z}|}{n_z} \leq
p_{a,z}\right)
\wedge \left( \frac{|\mrv{C}_{T_X}|}{n_x} \leq p_{a,x}\right)\Big]
\end{align*}
\end{proof}

\subsection{\label{sec_proof}Proof of Security}
Following~\cite{BGM09} and~\cite{BBBMR06},
we choose matrices $P_C$ and $P_K$ such that the inequality
$\frac{d_{r,m}}{2n} > p_{a,x} + \epsilon$
is satisfied for some $\epsilon$ (we will explain in
Subsection~\ref{sec_rel_rate_thres} why this is possible).
This means that
\begin{align}\label{PcPkInequality}
& P\left[\textstyle \left( \frac{| \widetilde{\mrv{C}_I}|}{n} \geq
\frac{d_{r,m}}{2n}\right) \wedge
\left(\frac{|\mrv{C}_{T_Z}|}{n_z} \leq p_{a,z}\right) \wedge
\left(\frac{|\mrv{C}_{T_X}|}{n_x} \leq p_{a,x}\right) \right]
\nonumber \\
&\leq P\left[\textstyle \left(\frac{|\widetilde{\mrv{C}_I}|}{n} >
p_{a,x} + \epsilon\right) \wedge \left(\frac{|\mrv{C}_{T_X}|}{n_x} \leq
p_{a,x}\right) \right].
\end{align}
We will now prove the right-hand-side of~\eqref{PcPkInequality}
to be exponentially small in $n$.

As said earlier, the random variable $\widetilde{\mrv{C}_I}$
corresponds to the bit string of errors on the INFO bits
if they had been encoded in the $x$ basis. The TEST-X bits are
also encoded in the $x$ basis, and the random variable $\mrv{C}_{T_X}$
corresponds to the bit string of errors on those bits.
Therefore, we can treat the selection of the $n$ INFO bits
and of the $n_x$ TEST-X bits as a random sampling (after the numbers
$n$, $n_z$, and $n_x$ \emph{and} the TEST-Z bits have all already
been chosen), and use Hoeffding's theorem
(that is described in Appendix~A of~\cite{BGM09}).

Therefore, for each bit string $c_1\ldots c_{n+n_x}$ that consists
of the errors in the $n+n_x$ INFO and TEST-X bits
\emph{if the INFO bits had been encoded in the $x$ basis},
we apply Hoeffding's theorem: namely, we take a sample of size $n$
without replacement from the population $c_1, \ldots, c_{n+n_x}$
(this corresponds to the random selection of the INFO bits and
the TEST-X bits, as defined above, given that the TEST-Z bits
have already been chosen).
Let $\overline{X} = \frac{|\widetilde{\mrv{C}_I}|}{n}$
be the average of the sample (this is exactly the error rate
on the INFO bits, assuming, again, the INFO bits had been encoded
in the $x$ basis); and let
$\mu = \frac{|\widetilde{\mrv{C}_I}| + |\mrv{C}_{T_X}|}{n + n_x}$ 
be the expectancy of $\overline{X}$ (this is exactly the error rate
on the INFO bits and TEST-X bits together).
Then $\frac{|\mrv{C}_{T_X}|}{n_x} \leq p_{a,x}$ is equivalent to
$(n + n_x)\mu - n\overline{X} \leq n_x \cdot p_{a,x}$,
and, therefore, to $n \cdot (\overline{X}-\mu)
\geq n_x \cdot (\mu - p_{a,x})$.
This means that the conditions
$\left(\frac{|\widetilde{\mrv{C}_I}|}{n} > p_{a,x}+\epsilon\right)$ and
$\left(\frac{|\mrv{C}_{T_X}|}{n_x} \leq p_{a,x}\right)$ rewrite to
\begin{align}
&\left(\overline{X} - \mu > \epsilon + p_{a,x} - \mu\right) \nonumber \\
&\wedge \left(\frac{n}{n_x} \cdot (\overline{X}-\mu)
\geq \mu - p_{a,x}\right), \label{inequalities}
\end{align}
which implies $\left(1 + \frac{n}{n_x}\right)(\overline{X} -\mu) >
\epsilon$, which is equivalent to $\overline{X} -\mu >
\frac{n_x}{n + n_x} \epsilon$.
Using Hoeffding's theorem (from Appendix~A of~\cite{BGM09}), we get:
\begin{align}
& P\left[ \left(\frac{|\widetilde{\mrv{C}_I}|}{n} > p_{a,x} +
\epsilon\right) \wedge \left(\frac{|\mrv{C}_{T_X}|}{n_x} \leq
p_{a,x}\right) \right] \nonumber \\
&\leq P\left[ \overline{X} - \mu > \frac{n_x}{n + n_x}\epsilon\right]
\leq e^{-2 \left(\textstyle\frac{n_x}{n + n_x}\right)^2 n\epsilon^2}
\end{align}

In the above discussion, we have actually proved the following Theorem:
\begin{theorem}\label{thmsecurity1}
Let us be given $\delta > 0$, $R > 0$, and,
for infinitely many values of $n$,
a family $\{v^n_1, \ldots, v^n_{r_n+m_n}\}$ of
linearly independent vectors in $\mathbf{F}_2^n$ such that
$\delta < \frac{d_{r_n,m_n} }{n}$ and $\frac{m_n}{n} \leq R$.
Then for any $p_{a,z}, p_{a,x} > 0$ and $\epsilon_\secur > 0$
such that $p_{a,x} + \epsilon_\secur \leq \frac{\delta}{2}$,
and for any $n, n_z, n_x > 0$ and two $m_n$-bit final keys
$\bk,\bk'$, Eve's difference between her states
corresponding to $\bk$ and $\bk'$ satisfies the following bound:
\begin{align}\label{EveInfoBound2}
\langle \Delta_\Eve^{(p_{a,z}, p_{a,x})}(\bk, \bk')\rangle \leq 2R\,
n e^{- \left(\textstyle\frac{n_x}{n + n_x}\right)^2 n \epsilon^2_\secur}
\end{align}
\end{theorem}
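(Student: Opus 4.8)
The plan is to assemble Theorem~\ref{thmsecurity} with the Hoeffding estimate derived immediately above the statement, using the hypotheses on the code family only to secure the separation condition $\frac{d_{r_n,m_n}}{2n} > p_{a,x} + \epsilon_\secur$ that underlies~\eqref{PcPkInequality}. Concretely, first I would fix one of the infinitely many admissible values of $n$, set $m = m_n$ and $r = r_n$, and take $P_C, P_K$ to be the matrices whose rows are $v^n_1, \ldots, v^n_{r_n+m_n}$; these are linearly independent by assumption, so they form a legitimate choice for the protocol. From $\delta < \frac{d_{r_n,m_n}}{n}$ together with $p_{a,x} + \epsilon_\secur \leq \frac{\delta}{2}$ one obtains $\frac{d_{r,m}}{2n} > \frac{\delta}{2} \geq p_{a,x} + \epsilon_\secur$, which is precisely the hypothesis under which~\eqref{PcPkInequality} was established; so the entire argument of Subsection~\ref{sec_proof} applies verbatim with $\epsilon = \epsilon_\secur$.

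Next I would invoke Theorem~\ref{thmsecurity} to bound $\langle \Delta_\Eve^{(p_{a,z}, p_{a,x})}(\bk, \bk')\rangle$ by $2m$ times the square root of the joint probability that the virtual $x$-basis error rate on the INFO bits is at least $\frac{d_{r,m}}{2n}$ while both test error rates stay below their thresholds. By~\eqref{PcPkInequality} that joint probability is at most $P\left[\left(\frac{|\widetilde{\mrv{C}_I}|}{n} > p_{a,x} + \epsilon_\secur\right) \wedge \left(\frac{|\mrv{C}_{T_X}|}{n_x} \leq p_{a,x}\right)\right]$, and the Hoeffding argument of Subsection~\ref{sec_proof} (sampling $n$ of the $n + n_x$ $x$-encoded positions without replacement, after $n_z$ and the TEST-Z positions have been fixed) bounds that by $e^{-2\left(\frac{n_x}{n+n_x}\right)^2 n \epsilon_\secur^2}$. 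Taking the square root halves the exponent, giving $\langle \Delta_\Eve\rangle \leq 2m\, e^{-\left(\frac{n_x}{n+n_x}\right)^2 n \epsilon_\secur^2}$, and finally $m = m_n \leq Rn$ yields the claimed bound $2Rn\, e^{-\left(\frac{n_x}{n+n_x}\right)^2 n \epsilon_\secur^2}$.

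I do not expect a genuine obstacle: every ingredient is already in place, and the proof is a bookkeeping chain (choose the code, verify the separation inequality, apply Theorem~\ref{thmsecurity}, apply~\eqref{PcPkInequality}, apply Hoeffding, use $m_n \leq Rn$). The one point deserving explicit care is the legitimacy of the \emph{conditional} Hoeffding step: the choice of which $x$-encoded positions are INFO versus TEST-X must be independent of the error pattern on those $n + n_x$ positions, which holds because Eve's attack is collective (each qubit attacked independently) and because the partition $\mathcal{P}$ is drawn uniformly by Alice — so I would state this independence explicitly rather than leave it implicit. A secondary subtlety, already dealt with when~\eqref{sdrewritten} was derived, is that all of this is asserted for Eve's actual attack $U$ and not the virtual flat attack; since that reduction transfers $\widehat{\rho}_0, \widehat{\rho}_1$ and can only lower the conjugate-basis error rate, it carries over here with no extra work.
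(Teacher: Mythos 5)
Your proposal is correct and follows essentially the same route as the paper, whose own ``proof'' of Theorem~\ref{thmsecurity1} is precisely the chain you describe: the separation condition $\frac{d_{r,m}}{2n} > \frac{\delta}{2} \geq p_{a,x}+\epsilon_\secur$ feeding~\eqref{PcPkInequality}, then Theorem~\ref{thmsecurity}, the Hoeffding sampling bound (with the square root halving the exponent), and finally $m_n \leq Rn$. Your explicit remarks about the conditional sampling and the flat-attack reduction match points the paper also makes, so there is nothing to add.
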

In Subsection~\ref{sec_rel_rate_thres} we explain why
this Theorem guarantees security.

We note that the quantity
$\langle \Delta_\Eve^{(p_{a,z}, p_{a,x})}(\bk, \bk')\rangle$ bounds
the expected values of the Shannon Distinguishability and of the
mutual information between Eve and the final key,
as done in~\cite{BGM09} and~\cite{BBBMR06},
which is sufficient for proving non-composable security;
but it also avoids composability problems: Eve is not required to
measure immediately after the protocol ends, but she is allowed to wait
until she gets more information; and equation~\eqref{EveInfoBound2}
bounds the trace distance between any two of Eve's possible states.

\subsection{\label{rel_proof}Reliability}
Security itself is not sufficient; we also need the key to be reliable
(namely, to be the same for Alice and Bob). This means that we should
make sure that the number of errors on the INFO bits is less than
the maximal number of errors that can be corrected by the
error-correcting code. We demand that our error-correcting code can
correct $n(p_{a,z}+\epsilon_\rel)$ errors. Therefore, reliability
of the final key with exponentially small probability of failure
is guaranteed by the following inequality:
(as said, $\mrv{C}_I$ corresponds to the actual bit string of errors
on the INFO bits in the protocol, when they are encoded
in the $z$ basis)
\begin{align*}
& P\left[ \left(\frac{|\mrv{C}_I|}{n} > p_{a,z} + \epsilon_\rel\right)
\wedge \left(\frac{|\mrv{C}_{T_Z}|}{n_z} \leq p_{a,z}\right) \right] \\
&\leq e^{-2 \left(\textstyle\frac{n_z}{n + n_z}\right)^2
n\epsilon_\text{rel}^2}
\end{align*}
This inequality is proved by an argument similar to the one used in
Subsection~\ref{sec_proof}: the selection of the INFO bits
and TEST-Z bits is a random partition of $n + n_z$ bits into
two subsets of sizes $n$ and $n_z$, respectively (assuming that the
TEST-X bits have already been chosen), and thus it corresponds
to Hoeffding's sampling.

\subsection{\label{sec_rel_rate_thres} Security, Reliability,
and Error Rate Threshold}
According to Theorem~\ref{thmsecurity1} and to the discussion in
Subsection~\ref{rel_proof}, to get both security and reliability
we only need vectors $\{v^n_1, \ldots, v^n_{r_n+m_n}\}$
satisfying both the conditions of the Theorem (distance
$\frac{d_{r_n,m_n}}{2n} > \frac{\delta}{2}
\geq p_{a,x} + \epsilon_\secur$)
and the reliability condition
(the ability to correct $n(p_{a,z}+\epsilon_\rel)$ errors).
Such families were proven to exist in Appendix~E of~\cite{BBBMR06},
giving the bit-rate:
\begin{eqnarray}
R_{\mathrm{secret}} &\triangleq& \frac{m}{n} \nonumber \\
&=& 1 - H_2(2 p_{a,x} + 2 \epsilon_\secur) \nonumber \\
&-& H_2 \left( p_{a,z} + \epsilon_\rel + \frac{1}{n} \right)
\end{eqnarray}
where $H_2(x) \triangleq -x \log_2(x) -(1-x) \log_2(1-x)$.

Note that we use here the error thresholds $p_{a,x}$ for security and
$p_{a,z}$ for reliability. This is possible,
because in~\cite{BBBMR06} those conditions (security and reliability)
on the codes are discussed separately.

To get the asymptotic error rate thresholds, we require
$R_{\mathrm{secret}} > 0$, and we get the condition:
\begin{equation}
H_2(2 p_{a,x} + 2 \epsilon_\secur) + H_2 \left( p_{a,z}
+ \epsilon_\rel + \frac{1}{n} \right) < 1
\end{equation}

The secure asymptotic error rate thresholds zone is shown in
Figure~\ref{fig:security} (it is below the curve),
assuming that $\frac{1}{n}$
is negligible. Note the trade-off between the error rates $p_{a,z}$ and
$p_{a,x}$. Also note that in the case $p_{a,z} = p_{a,x}$, we get the
same threshold as BB84 (\cite{BBBMR06} and~\cite{BGM09}), which is
7.56\%.

\begin{figure}
\includegraphics[width=\linewidth]{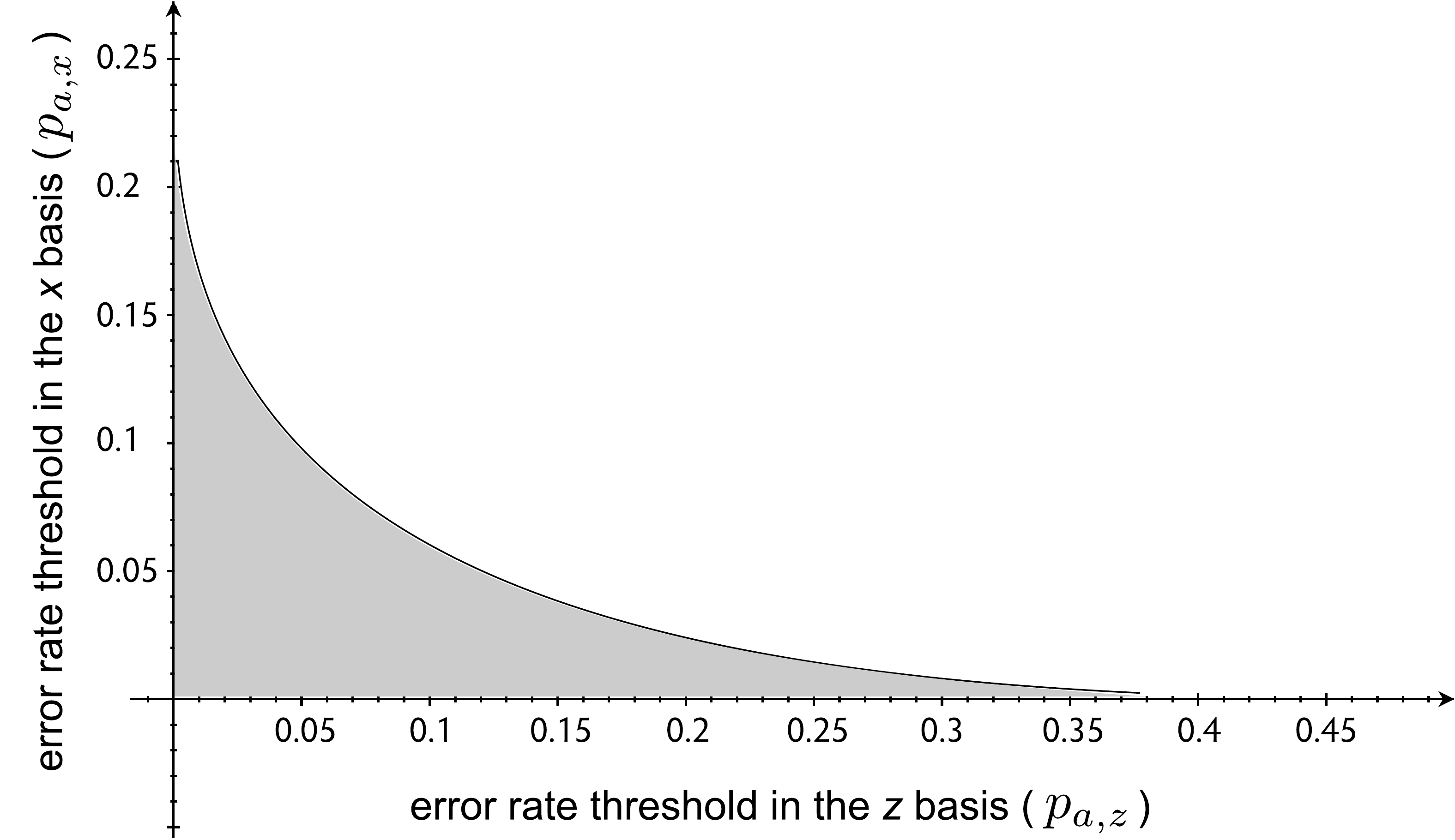}
\caption{\label{fig:security}The secure asymptotic error rates
zone (below the curve)}
\end{figure}

\section{Conclusion}
In this paper, we have analyzed the security of the BB84-INFO-$z$
protocol against any collective attack. We have discovered that
the results of BB84 hold very similarly for BB84-INFO-$z$,
with only two exceptions:
\begin{enumerate}
\item The error rates must be \uline{separately} checked to be
below the thresholds $p_{a,z}$ and $p_{a,x}$ for the TEST-Z and
TEST-X bits, respectively, while in BB84 the
error rate threshold $p_a$ applies to all the TEST bits
together.
\item The exponents of Eve's information (security) and of the
failure probability of the error-correcting code (reliability)
are different than in~\cite{BGM09}, because different numbers
of test bits are now allowed ($n_z$ and $n_x$ are arbitrary).
This implies that the exponents may decrease more slowly
(or more quickly) as a function of $n$.
However, if we choose $n_z = n_x = n$ (thus sending $N = 3n$
qubits from Alice to Bob), then we get exactly the same exponents
as in~\cite{BGM09}.
\end{enumerate}

The asymptotic error rate thresholds found in this paper are more
flexible than in BB84, because they allow us to tolerate a higher
threshold for a specific basis (say, the $x$ basis) if we demand a
lower threshold for the other basis ($z$). If we choose the same
error rate threshold for both bases, then the asymptotic bound is
7.56\%, exactly the bound found for BB84
in~\cite{BBBMR06} and~\cite{BGM09}.

We conclude that even if we change the BB84 protocol to have INFO bits
only in the $z$ basis, this does not harm its security and reliability
(at least against collective attacks).
This does not even change the asymptotic error rate threshold,
and allows more flexibility when choosing the thresholds for both bases.
The only drawbacks of this change are the need to check the
error rate for the two bases separately,
and the need to either send more qubits
($3n$ qubits in total, rather than $2n$)
or get a slower exponential decrease of the exponents required
for security and reliability.

We thus find that the feature of BB84, that both bases are used for
information, is not very important for security and reliability,
and that BB84-INFO-$z$ (that lacks this feature)
is almost as useful as BB84.
This may have important implications on the security and reliability of
other protocols that also only use one basis for information qubits,
as done in some two-way protocols.

We also present a better approach for the proof, that uses a quantum
distance between two states rather than the classical information.
In~\cite{BGM09},~\cite{BBBGM02}, and~\cite{BBBMR06},
the classical mutual information
between Eve's information (after an optimal measurement)
and the final key was calculated
(by using the trace distance between two quantum states);
although we should note that in~\cite{BGM09} and~\cite{BBBMR06},
the trace distance
was used for the proof of security of a single bit of the final key
even when all other bits are given to Eve, and only the last stages
of the proof discussed bounding the classical mutual information.
In the current paper, on the other hand, we use the trace distance
between the two quantum states until the end of the proof,
which avoids composability problems that existed in the previous works.

Therefore, this proof makes a step towards
making~\cite{BGM09},~\cite{BBBGM02},
and~\cite{BBBMR06} prove composable security of BB84
(namely, security even if Eve keeps her quantum states
until she gets more information when Alice and Bob use the key,
rather than measuring them in the end of the protocol).
This approach also applies (similarly) to the BB84 security proof
in~\cite{BGM09}.

\begin{acknowledgments}
The work of TM and RL was partly supported
by the Israeli MOD Research and Technology Unit.
\end{acknowledgments}

\bibliography{security}

\end{document}